\DeclareMathSymbol{\FORALL}   {\mathord}{symbols}{"38}
\DeclareMathSymbol{\EXISTS}   {\mathord}{symbols}{"39}
\DeclareMathSymbol{\SUCHTHAT} {\mathbin}{symbols}{"01}
\def\Exists#1#2{{\EXISTS#1}.~ #2}
\DeclareSymbolFont{AMSb}{U}{msb}{m}{n}
\DeclareMathSymbol{\N}{\mathbin}{AMSb}{"4E}
\DeclareSymbolFontAlphabet{\mathbb}{AMSb}
\newcommand{\NOTE}[1]{\ifthenelse{\boolean{draft}}{\begin{quotation}\textbf{NOTE}:#1\end{quotation}}{}}
\newcommand{\set}[1]{\ensuremath{\{#1\}}}
\newcommand{\mset}[1]{\ensuremath{\{\mathit{#1}\}}}
\newcommand{\setmin}[2]{\ensuremath{#1\!\setminus\!#2}}
\newcommand{\clos}[1]{\ensuremath{\mathit{clos}(#1)}}
\newcommand{\shoind}{$\mathcal{SHOIN}(\mathbf{D})$}
\newcommand{\shiq}{\ensuremath{\mathcal{SHIQ}}}
\newcommand{\dlrom}{\ensuremath{\mathcal{DLRO}^{-\set{\leq}}}}
\newcommand{\dlr}{\ensuremath{\mathcal{DLR}}}
\newcommand{\dlro}{\ensuremath{\mathcal{DLRO}}}
\newcommand{\shoin}{\ensuremath{\mathcal{SHOIN}}\xspace}
\newcommand{\alcnr}{\ensuremath{\mathcal{ALCNR}}}
\newcommand{\alc}{\ensuremath{\mathcal{ALC}}}
\newcommand{\al}{\ensuremath{\mathcal{AL}}}
\newcommand{\Int}{\mathcal{I}}
\newcommand{\DeltaI}{{\Delta^{\Int}}}
\newcommand{\sqs}{\sqsubseteq}
\newenvironment{program}{\[\begin{array}{rll}}{\end{array}\]}
\newcommand{\tsrule}[2]{\ensuremath{\mathit{#1} &\gets& \mathit{#2}\\}}
\newcommand{\ssrule}[2]{\ensuremath{\mathit{#1} & \gets & \mathit{#2}}}
\newcommand{\prule}[2]{\ensuremath{\mathit{#1}\gets\mathit{#2}}}
\newcommand{\pnorule}[2]{\ensuremath{{#1}\gets {#2}}}
\newcommand{\naf}[1]{not~#1}
\newcommand{\NAF}{\ensuremath{\textit{not}}}
\newcommand{\HBase}[1]{\ensuremath{\mathcal{B}_{#1}}}
\newcommand{\pred}[1]{{#1}}
\newcommand{\ground}[2]{{#1}_{#2}}
\newcommand{\preds}[1]{\ensuremath{\mathit{preds}(#1)}}
\newcommand{\vars}[1]{\ensuremath{\mathit{vars}(#1)}}
\newcommand{\dblexptime}{2-\textsc{exptime}}
\newcommand{\exptimex}[1]{{#1}-\textsc{exptime}}
\newcommand{\exptimetwonp}[1]{\ensuremath{2\mbox{-}\textsc{exptime}^{\textsc{np}}}}
\newcommand{\exptimetwonexptime}[1]{\ensuremath{2\mbox{-}\textsc{exptime}^{\textsc{nexptime}}}}
\newcommand{\exptimetwonexptimetwo}[1]{\ensuremath{2\mbox{-}\textsc{exptime}^{2\mbox{-}\textsc{nexptime}}}}
\newcommand{\dlrrole}{\ensuremath{(\$ i / n:C)}}
\newcommand{\dlrrolei}[2]{\ensuremath{(\$ #1 / n:#2)}}
\newcommand{\dlrrolein}[3]{\ensuremath{(\$ #1 / #2:#3)}}
\newcommand{\dlrconcept}{\ensuremath{\exists[\$ i]\mathbf{R}}}
\newcommand{\dlrconcepti}[2]{\ensuremath{\exists[\$ #1]#2}}
\newcommand{\dlrconceptless}{\ensuremath{\leq\! k [\$ i]\mathbf{R}}}
\newcommand{\card}[1]{\ensuremath{\mbox{\ensuremath{\vert #1 \vert}}}}
\newcommand{\posi}[1]{\ensuremath{{#1}^{+}}}
\newcommand{\nega}[1]{\ensuremath{{#1}^{-}}}
\newcommand{\lit}[1]{\ensuremath{\mathit{#1}}}
\newcommand{\taxiom}[2]{\ensuremath{\mathit{#1} &\sqs &\mathit{#2}\\}}
\newenvironment{knowb}{\[\begin{array}{rll}}{\end{array}\]}
\newcommand{\cts}[1]{\ensuremath{\mathit{cts}{(#1)}}}
\newcommand{\dllog}{\ensuremath{\mathcal{DL}\mathit{+log}}}
\newcommand{\dl}{\ensuremath{\mathcal{DL}}}
\newtheorem{definition}{Definition}
\newtheorem{example}{Example}
\newtheorem{theorem}{Theorem}
\newtheorem{corollary}{Corollary}
\begin{document}
\title[G-Hybrid Knowledge Bases]{Guarded Hybrid Knowledge Bases\footnote{A preliminary version of this paper appeared in the proceedings
of the  \emph{ICLP'06 Workshop on Applications of Logic Programming in the Semantic Web
and Semantic Web Services (ALPSWS2006)} pages 39-54, Seattle, Washington, USA, August 16 2006.}}

\author[S.~Heymans et al.]{STIJN HEYMANS$^1$, JOS DE BRUIJN$^2$, LIVIA PREDOIU$^3$, CRISTINA FEIER$^1$,
\authorbreak  DAVY VAN NIEUWENBORGH$^4$ \thanks{
The work is funded by the European Commission under the projects
ASG, DIP, enIRaF, InfraWebs, Knowledge Web, Musing, Salero, SEKT,
SEEMP, SemanticGOV, Super, SWING and TripCom; by Science Foundation
Ireland under the DERI-Lion Grant No.SFI/02/CE1/I13 ; by the FFG
(\"Osterreichische Forschungsförderungsgeselleschaft mbH) under the
projects Grisino, RW$^2$, SemNetMan, SEnSE, TSC and OnTourism.
Davy Van Nieuwenborgh was supported by the Flemish Fund
for Scientific Research (FWO-Vlaanderen).}\\
$^1$ Digital Enterprise Research Institute,
University of Innsbruck, Technikerstrasse 21a, Innsbruck, Austria \\
\email{\{stijn.heymans,cristina.feier\}@deri.at} \\
$^2$ Faculty of Computer Science, Free University of Bozen-Bolzano, Italy \\
\email{debruijn@inf.unibz.it} \\
$^3$ Institute of Computer Science,
    University of Mannheim,
     A5, 6 68159 Mannheim,
             Germany \\
             \email{livia@informatik.uni-mannheim.de}\\
$^4$ Dept. of Computer Science,
Vrije Universiteit Brussel, VUB,
Pleinlaan 2, B1050 Brussels, Belgium \\
\email{dvnieuwe@vub.ac.be}
    }

\maketitle
\begin{abstract}{Recently, there has been a lot of interest in the
    integration of Description Logics and rules on the Semantic Web.}
    We define \emph{guarded hybrid knowledge bases} (or \emph{g-hybrid knowledge bases}) as knowledge bases that
    consist of a Description Logic knowledge base and a \emph{guarded}
    logic program, similar to the \dllog{} knowledge bases from
    \cite{Rosa06b}. G-hybrid knowledge bases enable an integration of
    Description Logics and Logic Programming where, unlike in other
    approaches, variables in the rules of a guarded program do not
    need to appear in positive non-DL atoms of the body, i.e.~DL atoms can
    act as \emph{guards} as well. Decidability of satisfiability
    checking of g-hybrid knowledge bases is shown for the particular
    DL \dlrom{}{, which is close to OWL DL,} by a reduction to guarded programs under
    the open answer set semantics.  Moreover, we show
    \dblexptime-completeness for satisfiability checking of such g-hybrid knowledge bases.
    Finally, we discuss advantages
    and disadvantages of our approach compared with \dllog{} knowledge
    bases.
\end{abstract}

  \begin{keywords}
    g-hybrid knowledge bases, open answer set programming, guarded logic programs, description logics
  \end{keywords}

\section{Introduction}\label{sec:intro}

The integration of
Description Logics with rules has recently received a lot of attention in the context of the Semantic Web
\cite{rosati:05,Rosa06b,eiter2004,motik,Horrocks+Patel-Schneider-PropRuleLang:04,motik07,bruijn07}. R-hybrid knowledge
bases \cite{rosati:05}, and its extension \dllog{} \cite{Rosa06b}, is
an elegant formalism based on combined models for Description Logic
knowledge bases and nonmonotonic logic programs. We propose a variant
of r-hybrid knowledge bases, called \emph{g-hybrid knowledge bases},
which do not require standard names or a special safeness restriction on rules, but instead require the program to be \emph{guarded}.
We show several computational properties by a reduction to
guarded open answer set programming \cite{hvnv-lpnmr2005,Heymans+NieuwenborghETAL-OpenAnswProgwith:06}.

Open answer set programming (OASP) \cite{hvnv-lpnmr2005,Heymans+NieuwenborghETAL-OpenAnswProgwith:06}
combines the logic programming and first-order logic paradigms.  From
the logic programming paradigm it inherits a rule-based presentation
and a nonmonotonic semantics by means of negation as failure. In
contrast with usual logic programming semantics, such as the answer
set semantics \cite{gelfond88stable}, OASP allows for domains
consisting of other objects than those present in the logic program at
hand.  Such open domains are inspired by first-order logic based
languages such as Description Logics (DLs) \cite{dlbook} and make OASP
a viable candidate for conceptual reasoning.  Due to its rule-based
presentation and its support for nonmonotonic reasoning and open
domains, OASP can be used to reason with both rule-based and
conceptual knowledge on the Semantic Web, as illustrated in
\cite{hvnv:eswc2005}.

A major challenge for OASP is to control undecidability of
satisfiability checking, a challenge it shares with DL-based
languages.  In \cite{hvnv-lpnmr2005,Heymans+NieuwenborghETAL-OpenAnswProgwith:06},
we identify a decidable class of programs,
the so-called \emph{guarded programs}, for which decidability
of satisfiability checking is obtained by a translation to guarded
fixed point logic \cite{gradel}.  In \cite{hvnfv-row2006}, we show the
expressiveness of such guarded programs by simulating a DL with
$n$-ary roles and nominals.  In particular, we extend the DL \dlr{}
\cite{DL-97} with both \emph{concept nominals} $\set{o}$ and
\emph{role nominals} $\set{(o_1,\ldots,o_n)}$, resulting in $\dlro$.
We denote the DL \dlro{} without number restrictions as \dlrom{}. Satisfiability checking of
concept expressions w.r.t.~\dlrom{} knowledge bases can be reduced to checking satisfiability of guarded
programs \cite{Heymans+NieuwenborghETAL-OpenAnswProgwith:06}.
\par
A g-hybrid knowledge base consists
of a Description Logic knowledge base and a guarded program.  The
\dllog{} knowledge bases from \cite{Rosa06b} are \emph{weakly safe},
which means that the interaction between the program and the DL knowledge base is
restricted by requiring that  variables which appear in non-DL atoms,
appear in positive non-DL atoms in the body, where DL atoms are atoms involving a concept or role
symbol from the DL knowledge base.
G-hybrid knowledge bases do not require such a restriction; instead,
variables must appear in a \emph{guard} of the rule, but this
guard can be a DL atom as well.  In this paper, we show decidability of g-hybrid
knowledge bases for \dlrom{} knowledge bases by a reduction to
guarded programs, and show that satisfiability
checking of g-hybrid knowledge bases is \dblexptime{}-complete. The DL \dlrom{} is close to
\shoin, the Description Logic underlying OWL DL
\cite{HoPa04b}. Compared with \shoin, \dlrom{} does not include
transitive roles and number restrictions, but does include $n$-ary
roles and complex role expressions.
\par
To see why a combination of rules and ontologies, as proposed in g-hybrid knowledge bases, is useful, and why
the safeness conditions considered so far in the literature are not appropriate in all scenarios, consider the
Description Logic ontology
$$\begin{array}{l}\mathit{FraternityMember}\sqsubseteq \mathit{Drinker}\sqcap\exists \mathit{hasDrinkingBuddy.FraternityMember}
\end{array}$$
which says that fraternity members are drinkers who have drinking buddies, which are also fraternity members.
Now consider the logic program
$$
\begin{array}{lcl} \mathit{problemDrinker(X)}& \leftarrow & \mathit{Drinker(X)},\mathit{not\; socialDrinker(X)}\\
  \mathit{socialDrinker(X)}& \leftarrow & \mathit{Drinker}(X), \mathit{not \; problemDrinker(Y)},\\
  & & \mathit{hasDrinkingBuddy(X,Y)}\\
  \mathit{FraternityMember(John)}& \leftarrow
\end{array}$$
which says that drinkers are by default problem
drinkers, unless it is known that they are social drinkers;
drinkers with drinking buddies who are not problem
drinkers are social drinkers; and John
is a fraternity member. From the combination of the ontology and the
logic program, one would expect to derive that John is a social
drinker, and not a problem drinker. This logic program cannot be
expressed using r-hybrid knowledge bases, or \dllog{}, because the
rules in the program are not weakly safe .  However, the logic
program is \emph{guarded}, and thus part of a valid g-hybrid
knowledge base, which has the expected consequences.

\smallskip

The remainder of the paper starts with an introduction to open answer
set programming and Description Logics in Section \ref{sec:prel}.  Section \ref{sec:ghybr} defines g-hybrid
knowledge bases, translates them to guarded programs when the DL \dlrom{}
is considered, and provides a complexity characterization for
satisfiability checking of these particular g-hybrid knowledge bases.
In Section \ref{sec:related}, we discuss the relation of g-hybrid
knowledge bases with \dllog{} and other related work. We
conclude and give directions for further research in Section
\ref{sec:conclusions}.


\section{Preliminaries}\label{sec:prel}

In this section we introduce Open Answer Set Programming, guarded programs, and the
Description Logic \dlrom.

\subsection{Decidable Open Answer Set Programming}\label{sec:answer}

We introduce the open answer set semantics from
\cite{hvnv-lpnmr2005,Heymans+NieuwenborghETAL-OpenAnswProgwith:06},
modified as in \cite{hvnfv-row2006} such that it does not assume uniqueness of names by default.  \emph{Constants},
\emph{variables}, \emph{terms}, and \emph{atoms} are defined as usual.
A \textit{literal} is an atom $p(\vec{t})$ or a \textit{naf-literal}
$\naf{p(\vec{t})}$, with $\vec{t}$ a tuple of terms.\footnote{We do not allow ``classical'' negation
$\neg$, however, programs with $\neg$ can be reduced to programs
without it, see e.g.  \cite{lifschitz:2001}.} The \textit{positive
part} of a set of literals $\alpha$ is $\posi{\alpha} =
\set{p(\vec{t}) \mid p(\vec{t}) \in \alpha}$ and the \textit{negative
part} of $\alpha$ is $\nega{\alpha} = \set{p(\vec{t}) \mid
\naf{p(\vec{t})} \in \alpha}$.  We assume the existence of the (in)equality predicates $=$ and $\neq$,
usually written in infix notation; $t = s$ is an atom and
$t\neq s$ is short for $\naf{t =s}$.   A \emph{regular} atom is an atom without equality.
For a set $A$ of atoms, $\naf{A} = \set{\naf{l} \mid l \in A}$.
\par
A \textit{program} is a countable set of rules \prule{\alpha}{\beta},
where $\alpha$ and $\beta$ are finite sets of literals,
$\card{\posi{\alpha}} \leq 1$
(but $\nega{\alpha}$ may be of arbitrary
size), and every atom in $\posi{\alpha}$ is regular, i.e.~$\alpha$ contains at most one positive atom, which
may not contain the equality predicate.\footnote{The condition
$\card{\posi{\alpha}} \leq 1$ makes the GL-reduct non-disjunctive,
ensuring that the \emph{immediate consequence operator} is
well-defined, see \cite{Heymans+NieuwenborghETAL-OpenAnswProgwith:06}.} The set $\alpha$ is the
\textit{head} of the rule and represents a disjunction of literals,
while $\beta$ is the \textit{body} and represents a conjunction
of literals.  If $\alpha = \emptyset$, the rule is called a
\textit{constraint}.  \emph{Free rules} are rules of the form
${q(\vec{X})\lor\naf{q(\vec{X})}}\gets$; they enable a choice for the inclusion of atoms in a model.
We call a
predicate $p$  \emph{free} if there is a free rule
${p(\vec{X})\lor\naf{p(\vec{X})}}\gets$.  Atoms, literals, rules, and
programs that do not contain variables are \textit{ground}.
\par
For a literal, rule, or program $o$, let $\cts{o},\vars{o},\preds{o}$ be the constants, variables, and predicates, respectively, in
$o$.  A
\emph{pre-interpretation} $U$ for a program $P$ is a pair $(D,\sigma)$ where $D$
is a non-empty \emph{domain} and $\sigma:\cts{P}\to D$ is a function
which maps all constants in $P$ to elements from $D$.\footnote{In
\cite{Heymans+NieuwenborghETAL-OpenAnswProgwith:06}, we only use the domain $D$ which is there defined as a non-empty superset of
the constants in $P$.  This corresponds to a pre-interpretation $(D,\sigma)$ where
$\sigma$ is the identity function on $D$.} $\ground{P}{U}$ is the
ground program obtained from $P$ by substituting every variable in $P$
with every possible element from $D$ and every constant $c$ with
$\sigma(c)$.  E.g., for a rule $r:p(X)\gets f(X,c)$ and
$U=(\set{x,y},\sigma)$ where $\sigma(c)=x$, we have that the grounding
w.r.t.~$U$ is:
\begin{program}
\tsrule{p(x)}{f(x,x)}
\tsrule{p(y)}{f(y,x)}
\end{program}
\par
Let $\HBase{P}$ be the set of regular atoms obtained from the
language of the ground program $P$.
An \textit{interpretation} $I$ of a ground program  $P$ is a subset of
$\HBase{P}$.  For a ground regular atom $p(\vec{t})$, we write $I
\models p(\vec{t})$ if $p(\vec{t}) \in I$; for an equality atom
$t = s$, we write $I \models t=s$ if $s$ and
$t$ are equal terms.  We write $I \models \naf{p(\vec{t})}$ if $I
\not\models p(\vec{t})$, for $p(\vec{t})$ an atom.  For a set of ground literals $A$, $I\models
A$ holds if $I \models l$ for every $l \in A$.  A ground rule $r:
\prule{\alpha}{\beta}$ is \textit{satisfied} w.r.t.~$I$, denoted $I
\models r$, if $I \models l$ for some $l \in \alpha$ whenever $I
\models \beta$.  A ground constraint \prule{}{\beta} is
satisfied w.r.t.~$I$ if $I \not\models \beta$.
\par
For a ground program $P$ without \NAF, an interpretation $I$ of $P$ is
a \textit{model} of $P$ if $I$ satisfies every rule in $P$; it is an
\textit{answer set} of $P$ if it is a subset minimal model of $P$.
For ground programs $P$ containing \NAF, the \textit{reduct}
\cite{inoue98negation} w.r.t.~$I$ is $P^I$, where $P^I$
consists of \prule{\posi{\alpha}}{\posi{\beta}} for every \prule{\alpha}{\beta}
in $P$ such that $I \models \naf{\nega{\beta}}$ and $I \models \nega{\alpha}$.
$I$ is an \textit{answer set} of $P$ if $I$ is an answer set
of $P^I$.
  Note that allowing negation in the head of rules leads to the
  loss of the \emph{anti-chain property}~\cite{inoue98negation} which states that no answer set
  can be a strict subset of another answer set.  E.g, a rule \prule{a\lor\naf{a}}{} has the answer sets
  $\emptyset$ and $\set{a}$ .  However, negation in the head is required
  to ensure first-order behavior for certain predicates, e.g.,
  when simulating Description Logic reasoning.
\par
In the following, a program is assumed to be a finite set of rules;
infinite programs only appear as byproducts of grounding a finite
program using an infinite pre-interpretation.  An \textit{open interpretation} of
a program  $P$ is a pair $(U, M)$ where $U$ is a pre-interpretation for $P$ and
$M$ is an interpretation of $P_{U}$.  An \textit{open answer set} of
$P$ is an open interpretation $(U,M)$ of $P$ with $M$ an answer set of
$\ground{P}{U}$.  An $n$-ary predicate $p$ in $P$ is
\emph{satisfiable} if there is an open answer set $((D,\sigma),M)$ of
$P$ and a $\vec{x} \in D^n$ such that $p(\vec{x}) \in M$. A program
$P$ is satisfiable iff it has an open answer set. Note that
satisfiability checking of programs can be easily reduced to
satisfiability checking of predicates: $P$ is satisfiable iff $p$ is
satisfiable w.r.t.~$P\cup\set{p(\vec{X})\lor\naf{p(\vec{X})}\gets}$, where $p$ is a predicate symbol not used in $P$ and $\vec{X}$ is a tuple of variables. In the
following, when we speak of satisfiability checking, we refer
to satisfiability checking of predicates, unless specified otherwise.
\par
Satisfiability checking w.r.t.~the open answer set semantics is
undecidable in general.  In
\cite{Heymans+NieuwenborghETAL-OpenAnswProgwith:06}, we identify a
syntactically restricted fragment of programs, so-called
\emph{guarded programs}, for which satisfiability checking is
decidable, which is shown through a reduction to guarded fixed point
logic \cite{gradel}.  The decidability of guarded programs relies on
the presence of a  \emph{guard} in each rule, where a guard is an atom that contains
all variables of the rule.  Formally, a rule
$r:\prule{\alpha}{\beta}$ is \emph{guarded} if there is an atom
$\gamma_b \in {\posi{\beta}}$ such that $\vars{r}\subseteq
\vars{\gamma_b}$; $\gamma_b$ is the \emph{guard} of $r$.  A program
$P$ is a \emph{guarded program (GP)} if every non-free rule in $P$
is {guarded}.  E.g., a rule $a(X,Y)\gets \naf{f(X,Y)}$ is not
guarded, but $a(X,Y)\gets g(X,Y),\naf{f(X,Y)}$ is guarded with guard
$g(X,Y)$.  Satisfiability checking of predicates w.r.t.~guarded
programs is \exptimex{2}-complete
\cite{Heymans+NieuwenborghETAL-OpenAnswProgwith:06} -- a result that
stems from the corresponding complexity in guarded fixed point
logic.

\subsection{The Description Logic \dlrom}\label{subsec:dlr}

\dlr{}\index{\dlr{}} \cite{DL-97,dlbook} is a DL which supports roles of arbitrary arity, whereas
most DLs only support binary roles. We introduce an extension of \dlr{} with nominals, called \dlro{}
\cite{hvnfv-row2006}.  The basic building blocks of \dlro{} are
\emph{concept names} $A$ and \emph{relation names}\index{relation
name} $\mathbf{P}$ where $\mathbf{P}$ denotes an arbitrary $n$-ary
relation for $2\leq n \leq n_\mathit{max}$ and $n_\mathit{max}$ is a
given finite non-negative integer.  Role expressions $\mathbf{R}$ and
concept expressions $C$ are defined as:
\[
\begin{split}
\mathbf{R} &\to \top_n \mid \mathbf{P} \mid \dlrrole\index{\dlrrole{}} \mid \neg \mathbf{R} \mid \mathbf{R_1}\sqcap \mathbf{R_2} \mid \set{(o_1,\ldots,o_n)} \\
C  &\to \top_1 \mid A \mid \neg C \mid C_1 \sqcap C_2 \mid \dlrconcept\index{\dlrconcept{}} \mid\; \dlrconceptless\index{\dlrconceptless{}} \mid \set{o}
\end{split}
\]
where $i$ is between $1$ and $n$ in \dlrrole{};
similarly in \dlrconcept{} and \dlrconceptless{} for $\mathbf{R}$ an
$n$-ary relation.  Moreover, we assume that the above constructs are
 \emph{well-typed}, e.g., $\mathbf{R}_1\sqcap \mathbf{R}_2$ is defined only
for relations of the same arity.  The semantics of \dlro{} is given by
interpretations $\Int=(\DeltaI,\cdot^{\Int})$ where $\DeltaI$ is a
non-empty set, the \emph{domain}, and $\cdot^{\Int}$ is an
interpretation function such that $C^{\Int}\subseteq \DeltaI$,
$\mathbf{R}^{\Int}\subseteq (\DeltaI)^{n}$ for an $n$-ary relation
$\mathbf{R}$, and the following conditions are satisfied
($\mathbf{P}, \mathbf{R}, \mathbf{R}_1$, and $\mathbf{R}_2$ have arity
$n$):
\[
\begin{split}
\top_n^{\Int} &\subseteq (\DeltaI)^{n} \\
\mathbf{P}^{\Int} &\subseteq \top_n^{\Int} \\
(\neg \mathbf{R})^{\Int} &= \setmin{\top_n^{\Int}}{\mathbf{R}^{\Int}} \\
(\mathbf{R}_1\sqcap \mathbf{R}_2)^{\Int} &= \mathbf{R}^{\Int}_1\cap \mathbf{R}^{\Int}_2 \\
\dlrrole^{\Int} &= \set{(d_1,\ldots, d_n) \in \top_n^{\Int} \mid d_i\in C^{\Int}} \\
\end{split}
\]
\[
\begin{split}
\top_1^{\Int} &= \DeltaI \\
A^{\Int} &\subseteq \DeltaI \\
(\neg C)^{\Int} &= \setmin{\DeltaI}{C^{\Int}} \\
({C_1}\sqcap {C_2})^{\Int} &= C_1^{\Int}\cap C_2^{\Int} \\
(\dlrconcept)^{\Int} &= \set{d\in \DeltaI \mid \Exists{ (d_1, \ldots, d_n) \in \mathbf{R}^{\Int}}{d_i=d}} \\
(\dlrconceptless)^{\Int} &= \set{d\in \DeltaI \mid \;\card{\set{ (d_1, \ldots, d_n) \in \mathbf{R}^{\Int}\mid d_i=d}} \leq k} \\
\set{o}^{\Int} &=\set{o^{\Int}}\subseteq \DeltaI\\
\set{(o_1,\ldots, o_n)}^{\Int} &= \set{(o_1^{\Int},\ldots,o_n^{\Int})}
\end{split}
\]
Note that in \dlro{} the negation of role expressions is defined
w.r.t.~$\top_n^{\Int}$ and not w.r.t.~$(\DeltaI)^n$.  A \dlro{} knowledge base $\Sigma$
is a set of terminological axioms and role axioms, which denote subset
relations between concept and role expressions (of the
same arity), respectively.  A terminological axiom $C_1\sqs C_2$ is
\emph{satisfied} by $\Int$ iff $C_1^{\Int} \subseteq C_2^{\Int}$.  A
role axiom $\mathbf{R}_1\sqs \mathbf{R}_2$ is \emph{satisfied} by
$\Int$ iff $\mathbf{R}_1^{\Int}\subseteq \mathbf{R}_2^{\Int}$.  An
interpretation $\Int$ is a \emph{model} of a knowledge base $\Sigma $ (i.e.~$\Sigma $
is satisfied by $\Int$) if all axioms in $\Sigma $ are
satisfied by $\Int $; if $\Sigma $ has a model, then $\Sigma $ is  \emph{satisfiable}.
A concept expression $C$ is satisfiable w.r.t.~a knowledge
base $\Sigma$ if there is a model $\Int$ of $\Sigma$ such that $C^{\Int}\neq
\emptyset$.
\par
Note that for every interpretation $\Int$,
\[
(\set{(o_1,\ldots,o_n)})^{\Int} = (\dlrrolei{1}{\set{o_1}} \sqcap
\ldots \sqcap \dlrrolei{n}{\set{o_n}})^{\Int} \text{.}
\]
Therefore, in the remainder of the paper, we will restrict ourselves to
nominals of the form $\set{o}$.
We denote the fragment of \dlro{} without the number restriction
\dlrconceptless{} with \dlrom{}.


\section{G-hybrid Knowledge Bases}\label{sec:ghybr}

G-hybrid knowledge bases are combinations of Description Logic (DL)
knowledge bases and guarded logic programs (GP).  They are a variant of the r-hybrid knowledge
bases introduced in \cite{rosati:05}.

\begin{definition}
  \label{def:ghybrid}
  Given a Description Logic \dl{}, a \emph{g-hybrid knowledge base}
  is a pair $(\Sigma,P)$ where $\Sigma$ is a \dl{} knowledge base and
  $P$ is a {guarded program}
  .
\end{definition}
Note that in the above definition there are no restrictions on the use of predicate symbols.
We call the atoms
and literals in $P$ that have underlying predicate symbols which correspond to
concept or role names in the DL knowledge base \emph{DL atoms}
and \emph{DL literals}, respectively. {Variables in rules
{are not required } to appear in positive non-DL atoms, which
is the case in, e.g., the \dllog{} knowledge bases in \cite{Rosa06b},
the r-hybrid knowledge bases in \cite{rosati:05}, and the DL-safe rules
in \cite{motik}.}
  DL-atoms can appear in the head of rules, thereby enabling a
  bi-directional flow of information between the DL knowledge base and
  the logic program.
\begin{example}\label{ex:social}
  Consider the \dlrom{} knowledge base $\Sigma$ where
  \lit{socialDrinker} is a concept, \lit{drinks} is a ternary role
  such that, intuitively, $(x,y,z)$ is in the interpretation of
  \lit{drinks} if a person $x$ drinks some drink $z$ with a
   person $y$.  $\Sigma $ consists of the single axiom
  \begin{knowb}
    \taxiom{socialDrinker}{\dlrconcepti{1}{(drinks \sqcap \dlrrolein{3}{3}{\mset{wine}})}\;}
  \end{knowb}
  {
\noindent  which indicates that social drinkers
  drink wine
  with someone.
  Consider a GP $P$ that indicates that someone has an increased risk of alcoholism if
  the person is a social drinker and knows someone from {the association
    of Alcoholics Anonymous (AA).}  Furthermore, we state that
    \emph{john} is a social drinker and knows \emph{michael} from AA:
 \begin{program}
    \tsrule{ problematic(X) } { socialDrinker(X), knowsFromAA(X,Y) }
    \tsrule{ knowsFromAA(john,michael) }{}
    \tsrule{ socialDrinker(john) }{}
  \end{program}
  Together, $\Sigma$ and $P$ form a g-hybrid knowledge
  base. } The literals \lit{socialDrinker(X)} and
  \lit{socialDrinker(john)} are DL atoms where the latter appears in
  the head of a rule in $P$. The literal \emph{knowsFromAA(X,Y)}
  appears only in the program $P$ (and is thus not a DL atom).
\end{example}

Given a DL interpretation
$\Int=(\DeltaI,\cdot^{\Int})$ and a ground program $P$, we define $\Pi(P,\Int)$ as the
\emph{projection} of $P$ with respect to $\Int$, which is obtained as follows:
for every rule $r$ in $P$,

\begin{itemize}
\item if there exists a DL literal in the head of the form
    \begin{itemize}
    \item $A(\vec{t})$ with $\vec{t}\in A^{\Int}$, or
    \item $\naf{A(\vec{t})}$ with $\vec{t}\not\in A^{\Int}$,
    \end{itemize}
then delete $r$,

\item if there exists a DL literal in the body of the form
    \begin{itemize}
    \item $A(\vec{t})$ with $\vec{t}\not\in A^{\Int}$, or
    \item $\naf{A(\vec{t})}$ with $\vec{t}\in A^{\Int}$,
    \end{itemize}
then delete $r$,

\item otherwise, delete all DL literals from $r$.
\end{itemize}
{
Intuitively, the projection ``evaluates'' the program with respect to $\Int $
by removing (evaluating) rules and DL literals consistently with $\Int$;
conceptually this is similar to the reduct, which removes
rules and negative literals consistently with an interpretation of the program.
}
\begin{definition}
  Let $(\Sigma,P)$ be a g-hybrid knowledge base.
  An interpretation of $(\Sigma,P)$ is a tuple $(U,\Int,M)$ such that
\begin{itemize}
\item $U = (D,\sigma)$ is a pre-interpretation for $P$,
\item $\Int=(D,\cdot^\Int)$ is an interpretation of $\Sigma$,
\item $M$ is an interpretation of $\Pi(P_U,\Int)$, and
\item $b^{\Int} = \sigma(b)$ for every constant symbol $b$ appearing both in $\Sigma$ and in $P$.
\end{itemize}
\end{definition}

Then, $(U=(D,\sigma),\Int,M)$ is a  \emph{model} of a g-hybrid knowledge base $(\Sigma,P)$ if
$\Int$ is a model of $\Sigma $ and $M$ is an answer set of $\Pi(P_U,\Int)$.

For $p$ a concept expression from $\Sigma$ or a predicate from $P$, we
say that $p$ is \emph{satisfiable} w.r.t.~$(\Sigma,P)$ if there is a model
$(U,\Int,M)$ such that $p^{\Int}\neq \emptyset$ or $p(\vec{x})\in M$
for some $\vec{x}$ from $D$, respectively.
\begin{example}\label{ex:semanticssocial}
  Consider the g-hybrid knowledge base in Example \ref{ex:social}.
  Take $U = (D,\sigma)$ with $D = \{john,$ $michael,$ $wine,$ $x\}$ and
  $\sigma$ the identity function on the constant symbols in
  $(\Sigma,P)$. Furthermore, define $\cdot^{\Int}$ as follows:
  \begin{itemize}
    \item $\lit{socialDrinker}^{\Int}=\mset{john}$,
    \item $\lit{drinks}^{\Int} = \mset{(john,x,wine)}$,
    \item $\lit{wine}^{\Int} = \lit{wine}$.
  \end{itemize}
  If $M =
  \mset{knowsfromAA(john,michael),problematic(john)}$, then
  $(U,\Int,M)$ is a model of this g-hybrid knowledge base.  Note that the
  projection $\Pi(P,\Int)$ does not contain the rule
  \prule{socialDrinker(john)}{}.
\end{example}

\section{Translation to Guarded Logic Programs}
\label{sec:translation}

In this section we introduce a translation of g-hybrid knowledge bases to guarded
logic programs (GP)  under the open answer set semantics, show that this translation preserves satisfiability,
and use this translation to obtain complexity results for reasoning in g-hybrid
knowledge bases.
Before introducing the translation to guarded programs formally, we
introduce the translation through an example.

Consider the knowledge base in Example
\ref{ex:social}. The axiom
\begin{knowb}
    \taxiom{socialDrinker}{\dlrconcepti{1}{(drinks \sqcap \dlrrolein{3}{3}{\mset{wine}})}}
\end{knowb}
translates to
the constraint
\begin{program}
  \tsrule{}{socialDrinker(X), \naf{(\dlrconcepti{1}{(drinks \sqcap \dlrrolein{3}{3}{\mset{wine}})})(X)}}
\end{program}
Thus, the concept expressions on either side of the $\sqsubseteq$
symbol are associated with a new unary predicate name. For
convenience, we name the new predicates according to the original
concept expressions. The constraint simulates the behavior of the
\dlrom{} axiom. If the left-hand side of the axiom holds and the
right-hand side does not hold, there is a contradiction.
\par
It remains to ensure that those newly introduced predicates behave according
to the DL semantics. First, all the concept and role names occurring in the
axiom above need to be defined as free predicates, in order to
simulate the first-order semantics of concept and role names in DLs.
In DLs, a tuple is either true or false in a given interpretation
(cf.~the law of the excluded middle); this behavior can be captured exactly by the free predicates:
\begin{program}
\tsrule{socialDrinker(X) \lor \naf{socialDrinker(X)}}{}
\tsrule{drinks(X,Y,Z) \lor \naf{drinks(X,Y,Z)}}{}
\end{program}
Note that concept names are translated to
unary free predicates, while
$n$-ary role names are translated to
$n$-ary free predicates.
\par
The definition of the truth symbols $\top_1$ and $\top_3$ which are
implicit in our \dlrom{} axiom (since the axiom contains
a concept name and a ternary role)
are translated to free predicates as well. Note that we do not need a
predicate for $\top_2$ since the axiom does not contain binary
predicates.
\begin{program}
\tsrule{\top_1(X)\lor\naf\top_1(X)}{}
\tsrule{\top_3(X, Y, Z)\lor\naf\top_3(X, Y, Z)}{}
\end{program}
We ensure that, for the ternary \dlrom{} role $\lit{drinks}$, $drinks^{\Int} \subseteq \top_3^{\Int}$
holds by adding the constraint:
\begin{program}
  \tsrule{}{drinks(X, Y, Z), \naf{\top_3(X, Y, Z)}}
\end{program}
To ensure that $\top_1^{\Int} = \DeltaI$, we add the constraint:
\begin{program}
\tsrule{}{\naf{\top_1(X)}}
\end{program}
For rules containing only one variable, we can always assume that
$X = X$ is in the body and acts as the guard of the rule, so that the
latter rule is guarded; cf.~the equivalent rule
\prule{}{\naf{\top_1(X)},X=X}.
\par
We translate the nominal $\mset{wine}$ to the rule
\begin{program}
\tsrule{{\mset{wine}}(wine)}{}
\end{program}
Intuitively, since this rule will be the only rule with the predicate
$\mset{wine}$ in the head, every open answer set of the translated
program will contain $\mset{wine}(x)$ with $\sigma(\lit{wine}) =
x$ if and only if the corresponding interpretation
$\mset{wine}^{\Int} = \set{x}$ for $\lit{wine}^{\Int}=x$.

The \dlrom{} role expression $\dlrrolein{3}{3}{\mset{wine}}$ indicates
the ternary tuples for which the third argument belongs to the
extension of $\{wine\}$, which is translated to the following rule:
\begin{program}
  \tsrule{\dlrrolein{3}{3}{\mset{wine}}(X, Y, Z)}{\top_3(X, Y, Z), \mset{wine}(Z)}
\end{program}
Note that the above rule is guarded by the $\top_3$ literal.
\par
Finally, the concept expression $\lit{(drinks \sqcap \dlrrolein{3}{3}{\mset{wine}})}$ can be
represented by the following rule:
\begin{program}
\ssrule{(drinks \sqcap \dlrrolein{3}{3}{\mset{wine}})(X, Y, Z)} {drinks(X, Y, Z),} \\ && {\dlrrolein{3}{3}{\mset{wine}}(X, Y,Z)}
\end{program}
As we can see, the DL construct $\sqcap$ is translated to
conjunction in the body of a rule.
\par
The \dlrom{} role $\dlrconcepti{1}{(drinks \sqcap
\dlrrolein{3}{3}{\mset{wine}})}$ can be represented using the
following rule:
\begin{program}
  (\dlrconcepti{1}{(drinks \sqcap \dlrrolein{3}{3}{\mset{wine}})})(X)\;\leftarrow\;(drinks \sqcap \dlrrolein{3}{3}{\mset{wine}})(X, Y, Z)
\end{program}
Indeed, the elements which belong to the extension of
\dlrconcepti{1}{(drinks \sqcap \dlrrolein{3}{3}{\mset{wine}})} are
exactly those that are connected to the role
\dlrrolein{3}{3}{\mset{wine}}, as specified in the rule.
\par
This concludes the translation of the DL knowledge base in the
g-hybrid knowledge base of Example \ref{ex:social}.  The program
can be considered as is, since, by definition of g-hybrid knowledge
bases, it is already a guarded program.

\medskip

\noindent We now proceed with the formal translation.  The \emph{closure} \clos{\Sigma} of a
\dlrom{} knowledge base $\Sigma$ is defined as the smallest set satisfying
the following conditions:
\begin{itemize}
\item $\top_1\in \clos{\Sigma}$,
\item for each $C\sqs D$ an axiom in $\Sigma$ (role or terminological), $\set{C,D}\subseteq \clos{\Sigma}$,
\item for every $D$ in \clos{\Sigma}, $\clos{\Sigma}$
contains every subformula which is a concept expression or a role
expression,
\item if \clos{\Sigma} contains an $n$-ary relation name, it
contains $\top_n$.
\end{itemize}
We define
$\Phi(\Sigma)$ as the smallest logic program satisfying the following conditions:
\begin{itemize}
\item For each terminological axiom $C\sqs D\in \Sigma$, $\Phi(\Sigma)$ contains the constraint:
\begin{equation}\label{dlreq:axiom}
\pnorule{}{C(X),\naf{D(X)}}
\end{equation}
\item For each role axiom $\mathbf{R}\sqs \mathbf{S}\in \Sigma$ where
$\mathbf{R}$ and $\mathbf{S}$ are $n$-ary, $\Phi(\Sigma)$ contains:
\begin{equation}\label{dlreq:role}
\pnorule{}{\mathbf{R}(X_1,\ldots,X_n),\naf{\mathbf{S}(X_1,\ldots, X_n)}}
\end{equation}

\item For each $\top_n\in \clos{\Sigma}$, $\Phi(\Sigma)$ contains the free rule:
\begin{equation}\label{dlreq:top}
\pnorule{\top_n(X_1,\ldots, X_n)\lor\naf{\top_n(X_1,\ldots, X_n)}}{}
\end{equation}
Furthermore, for each $n$-ary relation name $\mathbf{P}\in
\clos{\Sigma}$, $\Phi(\Sigma)$ contains:
\begin{equation}\label{dlreq:contop}
\pnorule{}{\mathbf{P}(X_1,\ldots,X_n),\naf{\top_n(X_1,\ldots, X_n)}}
\end{equation}
Intuitively, the latter rule ensures that $\mathbf{P}^{\Int}\subseteq
\top_n^{\Int}$. Additionally, $\Phi(\Sigma)$ has to contain the constraint:
\begin{equation}\label{dlreq:top1}
\pnorule{}{\naf{\top_1(X)}}
\end{equation}
which ensures that, for every element $x$ in the pre-interpretation, $\top_1(x)$
is true in the open answer set. The latter rule ensures that
$\top_1^{\Int} = D$ for the corresponding interpretation. The rule is
implicitly guarded with $X=X$.

\item Next, we distinguish between the types of concept and role expressions
that appear in \clos{\Sigma}.  For each $D\in \clos{\Sigma}$:
\begin{itemize}
\item if $D$ is a concept nominal $\set{o}$, $\Phi(\Sigma)$ contains the fact:
\begin{equation}\label{dlreq:freeconcept}
\pnorule{D(o)}{}
\end{equation}

This fact ensures that $\set{o}(x)$ holds in any open answer set iff $x=\sigma(o) = o^{\Int}$ for an interpretation of $(\Sigma,P)$.

\item if $D$ is a concept name, $\Phi(\Sigma)$ contains:
\begin{equation}
\pnorule{D(X)\lor \naf{D}(X)}{}
\end{equation}
\item if $D$ is an $n$-ary relation name, $\Phi(\Sigma)$ contains:
\begin{equation}\label{dlreq:freerole}
\pnorule{\mathbf{D}(X_1,\ldots, X_n)\lor \naf{\mathbf{D}}(X_1,\ldots, X_n)}{}
\end{equation}
\item if $D=\neg E$ for a concept expression $E$, $\Phi(\Sigma)$ contains the rule:
\begin{equation}\label{dlreq:neg}
\pnorule{D(X)}{\naf{E(X)}}
\end{equation}
Note that we can again assume that such a  rule is guarded by $X=X$.

\item if $D=\neg \mathbf{R}$ for an $n$-ary role expression $\mathbf{R}$, $\Phi(\Sigma)$ contains:
\begin{equation}\label{dlreq:negrole}
{D(X_1,\ldots, X_n)}\gets \top_n(X_1,\ldots, X_n),\naf{\mathbf{R}(X_1,\ldots, X_n)}
\end{equation}
Note that if negation would have been defined w.r.t.~$D^n$
instead of $\top_n^{\Int}$, we would not be able to write the above as a
guarded rule.
\item if $D=E\sqcap F$ for concept expressions $E$ and $F$, $\Phi(\Sigma)$ contains:
\begin{equation}\label{dlreq:conj}
\pnorule{D(X)}{E(X),F(X)}
\end{equation}

\item if $D=\mathbf{E}\sqcap \mathbf{F}$ for $n$-ary role expressions $\mathbf{E}$
and $\mathbf{F}$, $\Phi(\Sigma)$ contains:
\begin{equation}\label{dlreq:conjrole}
\pnorule{D(X_1,\ldots, X_n)}{\mathbf{E}(X_1,\ldots,X_n),\mathbf{F}(X_1,\ldots,X_n)}
\end{equation}
\item if $D=\dlrrole$, $\Phi(\Sigma)$ contains:
\begin{equation}\label{dlreq:dlrrole}
{D(X_1,\ldots,X_i,\ldots, X_n)}\gets\\{\top_n(X_1,\ldots, X_i,\ldots, X_n),C(X_i)}
\end{equation}
\item if $D=\dlrconcept$, $\Phi(\Sigma)$ contains:
\begin{equation}\label{dlreq:exists}
\pnorule{D(X)}{\mathbf{R}(X_1,\ldots, X_{i-1},X,X_{i+1},\ldots, X_n)}
\end{equation}
\end{itemize}
\end{itemize}
The following theorem shows that this translation preserves satisfiability.

\begin{theorem}\label{th:DLtoLPGP}
Let $(\Sigma,P)$ be a g-hybrid knowledge base with $\Sigma$ a
\dlrom{} knowledge base.  Then, a predicate or
concept expression $p$ is satisfiable w.r.t.~$(\Sigma,P)$ iff $p$ is satisfiable w.r.t.~$\Phi(\Sigma)\cup P$.
\end{theorem}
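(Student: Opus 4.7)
\begin{proofsketch}
The plan is to prove the biconditional by constructing, for each direction, a model of one side from a model of the other, and then reading off the satisfying witness for $p$.

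For the ``only if'' direction, suppose $(U,\Int,M)$ is a model of $(\Sigma,P)$ with $U=(D,\sigma)$ that satisfies $p$. I would define a candidate open interpretation $(U,N)$ of $\Phi(\Sigma)\cup P$ by setting
\[
N \;=\; M \;\cup\; \{\, C(\vec d) \mid C\in\clos{\Sigma},\ \vec d\in C^{\Int}\,\}.
\]
The first task is to check that $N$ is an answer set of $\ground{(\Phi(\Sigma)\cup P)}{U}$, i.e.~a minimal model of the reduct. Rule shapes (\ref{dlreq:top})--(\ref{dlreq:freerole}) are free rules, so any choice is compatible with minimality; the fact that $\Int$ is a model of $\Sigma$ takes care of the constraints (\ref{dlreq:axiom})--(\ref{dlreq:contop}) and (\ref{dlreq:top1}); and the defining rules (\ref{dlreq:freeconcept}) and (\ref{dlreq:neg})--(\ref{dlreq:exists}) hold in $N$ by structural induction over $\clos{\Sigma}$ using the \dlrom{} semantic clauses. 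For the rules of $P$ itself, the key observation is that for every rule $r\in\ground{P}{U}$ either $r$ is deleted in forming $\Pi(P_U,\Int)$ because some DL literal is falsified by $\Int$ (in which case $N$ falsifies a body literal of $r$, so $r$ is trivially satisfied), or the DL literals of $r$ are stripped (in which case they are true in $N$ by construction) and the stripped rule is satisfied by $M$ since $M$ is an answer set of $\Pi(P_U,\Int)$. The witness for $p$ carries over directly since either $p^{\Int}\neq\emptyset$ gives $p(d)\in N$ via the defining rules, or $p(\vec x)\in M\subseteq N$.

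For the ``if'' direction, start from an open answer set $(U,N)$ of $\Phi(\Sigma)\cup P$. Define $\Int=(D,\cdot^\Int)$ on concept names $A$, relation names $\mathbf P$, and nominals by
$A^{\Int}=\{d\mid A(d)\in N\}$, $\mathbf P^{\Int}=\{\vec d\mid \mathbf P(\vec d)\in N\}$, $o^{\Int}=\sigma(o)$, and extend inductively to all expressions in $\clos{\Sigma}$ according to the \dlrom{} semantics. Let $M=N\setminus\{C(\vec d)\mid C \text{ is a DL symbol from } \Sigma\}$. One must then verify three things: (i) for every $C\in\clos{\Sigma}$, $C(\vec d)\in N$ iff $\vec d\in C^{\Int}$; (ii) $\Int$ is a model of $\Sigma$; (iii) $M$ is an answer set of $\Pi(P_U,\Int)$. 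Item~(i) is proved by induction on the structure of $C$, using that the defining rules (\ref{dlreq:neg})--(\ref{dlreq:exists}) fire exactly as in the corresponding semantic clauses, and using minimality of $N$ on the reduct to obtain the converse direction for conjunctions and existentials (nothing else can force these atoms into $N$). Item~(ii) is immediate from the constraints (\ref{dlreq:axiom}) and (\ref{dlreq:role}) together with (i). Item~(iii) is obtained by running the projection construction in reverse: a ground rule of $P$ is kept in $\Pi(P_U,\Int)$ (with DL literals stripped) precisely when all its DL literals are true in $\Int$, which by~(i) means they are true in $N$; such a rule is satisfied by $M$ because the original ground rule was satisfied by $N$. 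The witness for $p$ again carries over.

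The main obstacle I anticipate is item~(i), specifically the \emph{minimality} half of the induction for composite concepts, and analogously showing $M$ is \emph{minimal} for the reduct of $\Pi(P_U,\Int)$ rather than merely a model. The delicate point is that atoms like $(E\sqcap F)(d)$ have a unique defining rule (\ref{dlreq:conj}) in $\Phi(\Sigma)$, and the atomic predicates for $E$, $F$, and the relation/concept names are introduced only through free rules; so any strictly smaller model of the reduct of $N$ would contradict the inductive hypothesis that $N$ itself is minimal. I would formalise this by reducing a hypothetical smaller answer set $N'\subsetneq N$ to a smaller answer set on the $P$-part, contradicting minimality of $M$ in $\Pi(P_U,\Int)$. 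Once this bookkeeping is done, both directions slot together cleanly and the theorem follows.
\end{proofsketch}
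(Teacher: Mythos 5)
Your construction is essentially identical to the paper's: the forward direction uses the same $N = M \cup \set{C(\vec{d}) \mid \vec{d}\in C^{\Int},\ C\in\clos{\Sigma}}$ and the backward direction the same restriction $M = N\setminus\set{q(\vec{x})\mid q\in\clos{\Sigma}}$ with $\Int$ read off from the atoms in $N$, and the paper in fact leaves all of the verification work you sketch (the structural induction over $\clos{\Sigma}$, the minimality argument, and the correspondence with the projection $\Pi(P_U,\Int)$) as ``easy to verify.'' The one detail the paper is more careful about is that in the forward direction $U$ is not itself a pre-interpretation for $\Phi(\Sigma)\cup P$, since the translation introduces constants (nominals) occurring only in $\Sigma$; the paper therefore passes to $V=(D,\sigma')$ with $\sigma'(o)=o^{\Int}$ on those constants, which you should add to your construction.
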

\begin{proof}
($\Rightarrow$) Assume $p$ is satisfiable w.r.t.~$(\Sigma,P)$,
i.e., there exists a model $(U,{\Int},M)$ of $(\Sigma,P)$, with
$U=(D,\sigma)$, in which $p$ has a non-empty extension.  Now, we construct the open
interpretation $(V,N)$ of $\Phi(\Sigma,P)$ as follows.  $V = (D,\sigma')$
with $\sigma': \cts{\Phi(\Sigma)\cup P}\to D$, and
 $\sigma'(x) = \sigma(x)$ for every constant symbol $x$
from $P$ and $\sigma'(x) = x^{\Int}$ for every constant symbol $x$ from
$\Sigma$.  Note that $\sigma'$ is well-defined, since, for a constant
symbol $x$ which occurs in both $\Sigma$ and $P$, we have that
$\sigma(x) = x^{\Int}$. We define the set $N$ as follows:
\begin{multline*}
N=M \cup \set{\pred{C}(x) \mid x \in C^{\Int}, C\in \clos{\Sigma}}\\
\cup \set{\pred{\mathbf{R}}(x_1,\ldots,x_n)\mid
(x_1,\ldots,x_n) \in \mathbf{R}^{\Int}, R\in \clos{\Sigma}}
\end{multline*}
with $C$ and $\mathbf{R}$ concept expressions and role expressions respectively.
\par
It is easy to verify that $(V,N)$ is an open answer set of $\Phi(\Sigma)\cup P$ and $(V,N)$ satisfies $p$.


($\Leftarrow$) Assume $(V,N)$ is an open answer set of
$\Phi(\Sigma)\cup P$ with $V = (D,\sigma')$ such that $p$ is satisfied.  We define the interpretation
$(U,{\Int},N)$ of $(\Sigma, P)$ as follows.
\begin{itemize}
\item $U =(D,\sigma)$
where $\sigma:\cts{P} \to D$ with $\sigma(x) = \sigma'(x)$ (note
that this is possible since $\cts{P} \subseteq
\cts{\Phi(\Sigma)\cup P}$). $U$ is then a pre-interpretation for $P$.

\item  $\Int=(D,\cdot^\Int )$ is defined
such that $A^{\Int} =\set{x \mid \pred{A}(x) \in N}$ for
concept names $A$, $\mathbf{P}^{\Int} = \set{(x_1,\ldots,x_n)
\mid \pred{\mathbf{P}}(x_1,\ldots,x_n) \in N}$ for $n$-ary role names
$\mathbf{P}$ and ${o}^{\Int} = {\sigma'(o)}$, for $o$ a constant symbol
in $\Sigma$ (note that $\sigma'$ is indeed defined on $o$). $\Int$
is then an interpretation of $\Sigma$.

\item $M = \setmin{N}{\set{p(\vec{x})\mid p\in \clos{\Sigma}}}$, such that $M$ is an interpretation of $\Pi(P_U,\Int)$.
\end{itemize}
Moreover, for every constant symbol $b$ which appears in both  $\Sigma$ and  $P$,
$b^{\Int} = \sigma(b)$. As a consequence, $(U,\Int,M)$ is an interpretation of
$(\Sigma,P)$.
\par
It is easy to verify that $(U,\Int,M)$ is a model of $(\Sigma,P)$ which
satisfies $p$.
\end{proof}
\begin{theorem}\label{th:gpispol}
Let $(\Sigma,P)$ be a g-hybrid knowledge base where $\Sigma$ is a
\dlrom{} knowledge base.  Then,
$\Phi(\Sigma)\cup P$ is a guarded program with a size polynomial in the
size of $(\Sigma,P)$.
\end{theorem}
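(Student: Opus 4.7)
The proof splits into two independent verifications: that $\Phi(\Sigma)\cup P$ is guarded, and that its size is polynomial in $|(\Sigma,P)|$. The plan is to do a case analysis on the rule schemas (1)--(15) that generate $\Phi(\Sigma)$ for the first part, and a structural bound on $\clos{\Sigma}$ for the second.

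For guardedness, since $P$ is itself a guarded program by Definition~\ref{def:ghybrid}, it suffices to verify each rule produced by $\Phi(\Sigma)$. I would handle them in three groups. First, the free rules (3), (6), (7), (8), and the ground fact (9), are trivially admissible (free rules need no guard, and (9) contains no variables). Second, the single-variable rules (5), (10), (12) are guarded by the implicit equality atom $X=X$, exactly as the paper notes for (5) and (10). Third, the multi-variable rules all contain a positive regular body atom ranging over the full tuple of variables of the rule: (1) is guarded by $C(X)$; (2) by $\mathbf{R}(X_1,\ldots,X_n)$; (4) by $\mathbf{P}(X_1,\ldots,X_n)$; (11) and (14) by $\top_n(X_1,\ldots,X_n)$, which is exactly why the translation inserts that literal; (13) by $\mathbf{E}(X_1,\ldots,X_n)$; and (15) by the single body atom $\mathbf{R}(X_1,\ldots,X_{i-1},X,X_{i+1},\ldots,X_n)$, which by construction contains all free variables of the rule. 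Thus every rule in $\Phi(\Sigma)$ is either free or guarded, and since the union with the guarded $P$ preserves this property, $\Phi(\Sigma)\cup P$ is a guarded program.

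For the size bound, I would first argue that $|\clos{\Sigma}|$ is linear in $|\Sigma|$: it is the union of (i) all syntactic subformulas of concept/role expressions occurring in axioms of $\Sigma$, (ii) the symbol $\top_1$, and (iii) at most one $\top_n$ per distinct arity appearing in $\Sigma$. The number of subformulas is bounded by the total symbol count of $\Sigma$, and (ii)--(iii) contribute at most $n_{\mathit{max}}+1$ further elements, so $|\clos{\Sigma}| = O(|\Sigma|)$. Then I would observe that the translation emits at most a constant number of rules per element of $\clos{\Sigma}$ plus two rules per axiom, and each individual rule has size at most $O(n_{\mathit{max}})$ (bounded by the arity of the relation involved). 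Multiplying gives $|\Phi(\Sigma)| = O(|\Sigma|\cdot n_{\mathit{max}})$, which is polynomial in $|\Sigma|$. Adding $|P|$ on top yields the claimed polynomial bound for $|\Phi(\Sigma)\cup P|$.

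I do not expect a genuine obstacle here; the only subtlety is being careful with the two conventions the paper uses, namely that a single-variable rule is implicitly guarded by $X=X$, and that the $\top_n$-atoms introduced in (11) and (14) exist precisely so that negations of role expressions and selection constructs $\dlrrole$ can be expressed as guarded rules. Once these conventions are made explicit the case analysis is a routine check.
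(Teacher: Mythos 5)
Your proposal is correct and follows essentially the same route as the paper's proof: the paper likewise splits the claim into guardedness (which it dismisses as obvious, whereas you spell out the per-rule case analysis, correctly identifying the guards, the implicit $X=X$ guards for single-variable rules, and the exemption of free rules and ground facts) and a polynomial size bound obtained by bounding $\clos{\Sigma}$ and then counting a constant number of rules per closure element, each of size governed by the maximal arity. The only differences are cosmetic: your rule numbering drifts from the paper's equation labels, and you bound the \emph{cardinality} of $\clos{\Sigma}$ linearly where the paper bounds its total \emph{size} by $O(n\log n)$ and explicitly flags the assumption that arities of introduced relations stay polynomial --- both yield the same polynomial conclusion.
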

\begin{proof}
  The rules in $\Phi(\Sigma)$ are obviously guarded.  Since $P$ is a guarded program,
  $\Phi(\Sigma)\cup P$ is a guarded program as well.
  \par
  The size of $\clos{\Sigma}$ is of the order $n\log{n}$ where $n$ is
  the size of $\Sigma$. Intuitively, given that the size of an
  expression is $n$, we have that the size of the set of
  its subexpressions is at most the size
  of a tree with depth $\log{n}$ where the size of the subexpressions
  at a certain level of the tree is at most $n$.

The size of
  $\Phi(\Sigma)$ is clearly polynomial in the size of
  $\clos{\Sigma}$, assuming that the arity $n$ of an added role
  expression is polynomial in the size of the maximal arity of role
  expressions in $\Sigma$. If we were to add a relation name
  $\mathbf{R}$ with arity $2^n$, where $n$ is the maximal arity of
  relation names in $C$ and $\Sigma$, the size of $\Sigma$ would
  increase linearly, but the size of $\Phi(\Sigma)\cup P$ would
  increase exponentially: one needs to add, e.g., rules
  \[
  \pnorule{\top_{2^n}(X_1,\ldots, X_{2^n})\lor
    \naf{\top_{2^n}(X_1,\ldots, X_{2^n})}}{}
  \]
    which introduce an exponential number of arguments while the size of the
    role $\mathbf{R}$ does not depend on its arity.
\end{proof}

Note that in g-hybrid knowledge bases, we consider \dlrom{}, which is \dlro{}
without expressions of the form
$\dlrconceptless$, since such expressions cannot be simulated with guarded
programs.
E.g., consider the concept expression $\leq 1 [\$ 1] R$ where $R$ is a
binary role.  One can simulate the $\leq$ by
negation as failure:
\[
\pnorule{\leq 1 [\$ 1] R(X)}{\naf{q(X)}}
\]
for some new $q$,  with $q$ defined such that there are at least 2
different $R$-successors:
\[
\pnorule{q(X)}{R(X,Y_1),R(X,Y_2), Y_1\neq Y_2}
\]
However, the latter rule is not guarded -- there is no atom
that contains $X$, $Y_1$, and $Y_2$.  So, in general, expressing number
restrictions such as \dlrconceptless{} is out of reach for
GPs. From Theorems \ref{th:DLtoLPGP} and
\ref{th:gpispol} we obtain the following corollary.
\begin{corollary}\label{cor:exp}
Satisfiability checking w.r.t.~g-hybrid knowledge
bases $(\Sigma,P)$, with $\Sigma$ a
\dlrom{} knowledge base, can be polynomially reduced to satisfiability checking w.r.t.
GPs.
\end{corollary}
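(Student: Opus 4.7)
The plan is to obtain the corollary as a direct composition of the two preceding theorems, so the work consists mostly in stating the reduction explicitly and verifying that nothing goes wrong at the interface. Given an instance of the satisfiability problem for g-hybrid knowledge bases, namely a pair $(\Sigma,P)$ with $\Sigma$ a \dlrom{} knowledge base together with a predicate or concept expression $p$, the reduction $f$ will output the pair $(\Phi(\Sigma)\cup P,\,p)$, where $\Phi(\Sigma)$ is the program defined in Section~\ref{sec:translation}.

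First, I would verify that $f$ is correct. This is exactly the content of Theorem~\ref{th:DLtoLPGP}: $p$ is satisfiable w.r.t.~$(\Sigma,P)$ if and only if $p$ is satisfiable w.r.t.~$\Phi(\Sigma)\cup P$. If $p$ is a concept expression from $\Sigma$, then by construction $p\in \clos{\Sigma}$ and hence $\Phi(\Sigma)$ contains a unary predicate symbol named after $p$, so ``satisfiability of $p$'' on the right-hand side is well-defined as satisfiability of this predicate.

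Second, I would check that $f$ actually lands in the class of guarded programs and is computable in polynomial time. By Theorem~\ref{th:gpispol}, $\Phi(\Sigma)\cup P$ is guarded and its size is polynomial in the size of $(\Sigma,P)$, under the (benign) assumption that the maximal arity $n$ of role expressions is given in unary or is at most polynomial in $|\Sigma|$. The construction of $\Phi(\Sigma)$ from Section~\ref{sec:translation} is purely syntactic: one computes $\clos{\Sigma}$ and then emits the rules~(\ref{dlreq:axiom})--(\ref{dlreq:exists}) associated to each element, which is doable in polynomial time.

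There is essentially no obstacle here, since the two theorems do all the heavy lifting; the only point that needs a word of care is the treatment of the target of the reduction. The definition of ``satisfiability checking w.r.t.~GPs'' refers to predicate satisfiability in guarded programs, so to reduce satisfiability of a concept expression $p$ one uses the fact that every expression in $\clos{\Sigma}$ is represented by a predicate of $\Phi(\Sigma)$; to reduce satisfiability of a predicate $p$ occurring in $P$ nothing further is required, since $P\subseteq \Phi(\Sigma)\cup P$ and $p$ retains its meaning. Combining these observations yields the claimed polynomial-time many-one reduction.
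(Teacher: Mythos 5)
Your proof is correct and follows exactly the route the paper intends: the corollary is stated as an immediate consequence of Theorems~\ref{th:DLtoLPGP} (correctness of the map $(\Sigma,P)\mapsto\Phi(\Sigma)\cup P$) and~\ref{th:gpispol} (the image is a guarded program of polynomial size), which is precisely your composition. Your extra remarks on representing concept expressions as predicates of $\Phi(\Sigma)$ and on polynomial-time computability are sound refinements of the same argument, not a different approach.
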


Since satisfiability checking w.r.t.~GPs is \dblexptime-complete
\cite{Heymans+NieuwenborghETAL-OpenAnswProgwith:06}, we obtain the same \dblexptime{} characterization for
g-hybrid knowledge bases. We first make explicit a corollary of
Theorem \ref{th:DLtoLPGP}.
\begin{corollary}\label{lem:red}
Let $P$ be a guarded program.  Then, a concept or role expression $p$ is
satisfiable w.r.t.~$P$ iff $p$ is satisfiable w.r.t.~$(\emptyset,P)$.
\end{corollary}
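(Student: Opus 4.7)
The plan is to obtain this statement as a direct instantiation of Theorem~\ref{th:DLtoLPGP} with $\Sigma=\emptyset$, combined with a simple observation about the translation $\Phi$ applied to an empty knowledge base. By Theorem~\ref{th:DLtoLPGP}, $p$ is satisfiable w.r.t.~$(\emptyset,P)$ iff $p$ is satisfiable w.r.t.~$\Phi(\emptyset)\cup P$, so the whole question reduces to showing that satisfiability of $p$ w.r.t.~$P$ coincides with satisfiability of $p$ w.r.t.~$\Phi(\emptyset)\cup P$.

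To do this, I would first inspect $\clos{\emptyset}$: the closure definition forces only $\top_1\in\clos{\emptyset}$, since there are no axioms to seed further concept or role expressions. Consequently $\Phi(\emptyset)$ consists solely of the free rule $\top_1(X)\lor\naf{\top_1(X)}\gets$ from~(\ref{dlreq:top}) and the constraint $\gets\naf{\top_1(X)}$ from~(\ref{dlreq:top1}). Since $\top_1$ is a reserved symbol not occurring in the user-written program $P$, these two rules are entirely decoupled from the rules of $P$: they merely force $\top_1$ to hold on the full domain of any open answer set, without touching any other predicate.

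The forward direction ($\Leftarrow$) then goes as follows: given an open answer set $(U,M)$ of $P$ witnessing satisfiability of $p$, with $U=(D,\sigma)$, let $N=M\cup\{\top_1(d)\mid d\in D\}$. One checks that $(U,N)$ is an open answer set of $\ground{(\Phi(\emptyset)\cup P)}{U}$: the added $\top_1$-atoms are supported by the free rule, satisfy the constraint, and appear in no reduct of any rule from $P$, so the answer-set condition for $P$ is preserved. Satisfiability of $p$ carries over unchanged. For the reverse direction ($\Rightarrow$), given an open answer set $(V,N)$ of $\Phi(\emptyset)\cup P$ witnessing $p$, let $M=N\setminus\{\top_1(d)\mid d\in D\}$; because no rule of $P$ mentions $\top_1$, the GL-reducts $(\ground{P}{V})^M$ and $(\ground{P}{V})^N$ coincide on $P$, so $(V,M)$ is an open answer set of $P$, still witnessing~$p$.

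There is essentially no obstacle here; the only point requiring care is the observation that $\top_1$ is fresh for $P$ (which is a standard convention, since $\top_1$ is part of the DL vocabulary rather than of user-defined programs), so that adding or removing its atoms does not interact with the minimality condition for answer sets of $P$.
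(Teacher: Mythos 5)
Your proposal is correct and follows exactly the route the paper intends: the paper presents this statement without further argument as a corollary of Theorem~\ref{th:DLtoLPGP}, i.e.\ instantiate that theorem at $\Sigma=\emptyset$ and observe that $\Phi(\emptyset)$ reduces to the two $\top_1$ rules, which are inert with respect to $P$. Your explicit verification that $\clos{\emptyset}=\set{\top_1}$ and that adding or stripping the $\top_1$ atoms preserves the answer-set (minimality) condition simply fills in the details the paper leaves implicit.
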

\begin{theorem}\label{th:complete}
Satisfiability checking w.r.t.~g-hybrid knowledge
bases where the DL part is a
\dlrom{} knowledge base is \dblexptime-complete.
\end{theorem}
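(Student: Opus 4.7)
The plan is to establish the two matching bounds separately, both of which follow fairly directly from the infrastructure already assembled in this section.

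For the upper bound (membership in \dblexptime), I would invoke Corollary \ref{cor:exp}: given a g-hybrid knowledge base $(\Sigma,P)$ with $\Sigma$ a \dlrom{} knowledge base, one constructs the guarded program $\Phi(\Sigma)\cup P$ in polynomial time, and by Theorem \ref{th:DLtoLPGP} satisfiability of a predicate or concept expression is preserved. Since satisfiability checking for guarded programs under the open answer set semantics is in \dblexptime{} \cite{Heymans+NieuwenborghETAL-OpenAnswProgwith:06}, and a polynomial-time preprocessing does not change the complexity class, satisfiability checking for g-hybrid knowledge bases over \dlrom{} is in \dblexptime.

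For the lower bound (\dblexptime-hardness), I would reduce satisfiability checking for guarded programs to satisfiability checking for g-hybrid knowledge bases. Given an arbitrary guarded program $P$ and a predicate $p$ whose satisfiability w.r.t.~$P$ we wish to decide, consider the g-hybrid knowledge base $(\emptyset,P)$, where the \dlrom{} component is the empty knowledge base. By Corollary \ref{lem:red}, $p$ is satisfiable w.r.t.~$P$ iff $p$ is satisfiable w.r.t.~$(\emptyset,P)$. Since this is a trivial (in fact, identity-level) polynomial reduction, and satisfiability of guarded programs is already known to be \dblexptime-hard \cite{Heymans+NieuwenborghETAL-OpenAnswProgwith:06}, it follows that satisfiability checking for g-hybrid knowledge bases is \dblexptime-hard as well.

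Combining the two directions gives the \dblexptime-completeness stated in the theorem. The only point that needs minor care is to verify that a g-hybrid knowledge base with an empty DL part is well-defined in the sense of Definition \ref{def:ghybrid} and that an interpretation $(U,\Int,M)$ with trivial $\Int$ exactly realises the open answer sets of $P$, but this is immediate from the projection $\Pi(P_U,\Int)$ reducing to $P_U$ when there are no DL atoms whose truth is constrained by $\Sigma=\emptyset$. No step in this proof requires substantial new work; the real labour was already discharged in Theorems \ref{th:DLtoLPGP} and \ref{th:gpispol}, with the \dblexptime{} bound for guarded open answer set programming serving as the external hard result on which both directions rest.
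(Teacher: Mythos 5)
Your proposal is correct and follows exactly the paper's own argument: membership via the polynomial reduction of Corollary \ref{cor:exp} to satisfiability checking for guarded programs (which is in \dblexptime), and hardness via Corollary \ref{lem:red} together with the \dblexptime-hardness of guarded programs. The extra care you take about the empty DL part and the projection collapsing to $P_U$ is a reasonable elaboration of what the paper leaves implicit, but it does not change the route.
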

\begin{proof}
Membership in \dblexptime{} follows from Corollary \ref{cor:exp}.
Hardness follows from \dblexptime-hardness of satisfiability checking
w.r.t.~GPs and the reduction to satisfiability checking in
Corollary \ref{lem:red}.
\end{proof}

\section{Relation with \dllog{} and other Related Work}\label{sec:related}

In \cite{Rosa06b}, so-called \dllog{} knowledge bases combine a
Description Logic knowledge base with a \emph{weakly-safe} disjunctive
logic program.
Formally, for a particular Description Logic \dl{}, a \emph{\dllog{} knowledge base}
is a pair $(\Sigma,P)$ where $\Sigma$ is a $\dl$
knowledge base consisting of a \emph{TBox} (a set of terminological
axioms) and an \emph{ABox} (a set of \emph{assertional axioms}), and $P$ contains rules
$\prule{\alpha}{\beta}$  such that for every rule $r:\prule{\alpha}{\beta} \in P$:
\begin{itemize}
  \item $\nega{\alpha} = \emptyset$,
  \item  $\nega{\beta}$ does not contain DL atoms (\emph{DL-positiveness}),
  \item each variable in $r$ occurs in $\posi{\beta}$ (\emph{Datalog safeness}), and
  \item each variable in $r$ which occurs in a non-DL atom, occurs in a non-DL atom
    in $\posi{\beta}$ (\emph{weak safeness}).
\end{itemize}
\par
The semantics for \dllog{} is the same as that of g-hybrid knowledge
bases\footnote{Strictly speaking, we did not define answer sets of
disjunctive programs, however, the definitions of Subsection
\ref{sec:answer} can serve for disjunctive programs without
modification. Also, we did not consider ABoxes in our definition of
DLs in Subsection \ref{subsec:dlr}. However, the extension of the
semantics to DL knowledge bases with ABoxes is straightforward.},
with the following exceptions:
\begin{itemize}
  \item We do not require the
    \emph{standard name assumption}, which basically says that the domain of every
    interpretation is essentially the same infinitely countable set of
    constants. Neither do we have the implied \emph{unique name
    assumption}, making the semantics for  g-hybrid knowledge
    bases more in line with current Semantic Web standards such as OWL
    \cite{owl} where neither the standard names assumption nor the unique
    names assumption applies.  Note that Rosati also presented a version
    of hybrid knowledge bases which does not adhere to the unique name
    assumption in an
    earlier work \cite{Rosa05b}. However, the grounding of the program part is with
    the constant symbols explicitly appearing in
    the program or DL part only, which yields a less tight integration
    of the program and the DL part than in \cite{Rosa06b} or in
    g-hybrid knowledge bases.
  \item We define an interpretation as a triple
    $(U,\Int,M)$ instead of a pair $(U,\Int')$ where $\Int' = \Int \cup
    M$; this is, however, equivalent to \dllog.
\end{itemize}
The key
differences of the two approaches are:
\begin{itemize}
  \item The programs considered in \dllog{} may have multiple positive literals in the head,
whereas we allow at most one.  However, we allow negative literals in the head,
whereas this is not allowed in \dllog{}.  Additionally, since DL-atoms are interpreted classically,
we may simulate positive DL-atoms in the head through negative DL-atoms in the body.
  \item Instead of Datalog safeness we require \emph{guardedness}.
    Whereas with Datalog safeness every variable in the rule should
    appear in some positive atom of the body of the rule, guardedness
    requires that there is a positive atom that contains every
    variable in the rule, with the exception of free rules. E.g., \prule{a(X)}{b(X),c(Y)} is Datalog
    safe since $X$ appears in $b(X)$ and $Y$ appears in $c(Y)$, but this rule
    is not guarded since there is no atom that contains both $X$ and
    $Y$.  Note that we could easily extend the
    approach taken in this paper to \emph{loosely guarded programs}
    which require that every two variables in the rule should appear
    together in a positive atom, However, this would still be less
    expressive than Datalog safeness.
  \item We do not have the requirement for weak safeness, i.e., head
    variables do not need to appear positively in a non-DL atom. The
    guardedness may be provided by a DL atom.
    \begin{example}
      Example \ref{ex:social} contains the rule
      \begin{program}
      \tsrule{ problematic(X) } { socialDrinker(X), knowsFromAA(X,Y) }
      \end{program}
      This allows to deduce that $X$ might be a problem case
      even if $X$ knows someone from the AA but is not drinking with
      that person. Indeed, as illustrated by the model in
      Example \ref{ex:social}, \lit{john} is drinking wine with some
      anonymous $x$ and knows \lit{michael} from the AA. More correct
      would be the rule
      \begin{program}
    \tsrule{problematic(X,Z)}{drinks(X,Y,Z),knowsFromAA(X,Y)}
      \end{program}
      where we explicitly say that $X$ and $Y$ in the \lit{drinks} and
      \lit{knowsFromAA} relations should be the same, and we extend the
      \lit{problematic} predicate with the kind of drink that $X$ has
      a problem with.  Then, the head
      variable $Z$ is guarded by the DL atom \lit{drinks} and the rule
      is thus not weakly-safe, but is guarded nonetheless. Thus, the
      resulting knowledge base {is} not a \dllog{} knowledge base,
      but {is} a g-hybrid knowledge base.
    \end{example}
  \item We do not have the requirement for DL-positiveness, i.e., DL
    atoms may appear negated in the body of rules (and also in the
    heads of rules).  However, one could allow this in \dllog{}
    knowledge bases as well, since $\naf{A(\vec{X})}$ in the body of
    the rule has the same effect as $A(\vec{X})$ in the head,
    {
    where} the
    latter is allowed in \cite{Rosa06b}. Vice versa, we can also
    loosen our restriction on the occurrence of positive atoms in the
    head (which allows at most one positive atom in the head), to allow
    {
    for an arbitrary number of positive DL atoms in the head (but
    still keep the number of positive non-DL atoms limited to one).
    E.g., a rule $\prule{p(X)\lor A(X)}{\beta}$, where $A(X)$ is a DL
    atom, is not a valid rule in
    the programs we considered since the head contains more than one
    positive atom.  However, we can always rewrite such a rule to
    $\prule{p(X)}{\beta, \naf{A(X)}}$, which contains at most one
    positive atom in the head.
    }

{Arguably, DL atoms should not be allowed to occur negatively, because
DL predicates are interpreted classically and thus the negation in
front of the DL atom is not nonmonotonic. However, Datalog predicates
which depend on DL predicates are also (partially) interpreted
classically, and DL atoms occurring negatively in the body
are equivalent to DL atoms occurring positively in the head which allows us to partly overcome
our limitation of rule heads to one positive atom. }
  \item We do not take into account ABoxes in the DL knowledge base.  However, the DL we consider includes
    nominals such that one can simulate the ABox using terminological
    axioms.  Moreover, even if the DL does not include nominals, the
    ABox can be written as ground facts in a program and ground facts
    are always guarded.
  \item Decidability for satisfiability
    checking\footnote{\cite{Rosa06b} considers checking satisfiability
    of knowledge bases rather than satisfiability of predicates.
    However, the former can easily be reduced to the latter.}
    of \dllog{} knowledge bases is guaranteed if
    decidability of the
    conjunctive query containment problem is guaranteed for the DL at hand.
    In contrast, we relied on a translation of
    DLs to guarded programs for establishing decidability, and, as explained in the previous
    section, not all DLs (e.g.~those with number restrictions) can be translated
    to such a GP.
\end{itemize}

\smallskip

We briefly mention \al-log \cite{doni-lenz-nard-scha-98}, which is a predecessor of
\dllog. \al-log considers \alc{} knowledge bases for the DL part and a set of positive Horn clauses
for the program part.
Every variable must appear in a positive atom in the body, and
concept names are the only DL predicates which may be used in the rules,
and they may only be used in rule bodies.

\cite{hustadt2003}
and \cite{swift04} simulate reasoning in DLs with an LP formalism by
using an intermediate translation to first-order clauses.  In
\cite{hustadt2003}, $\mathcal{SHIQ}$ knowledge bases are reduced to
first-order formulas, to which the basic superposition
calculus\index{basic superposition calculus} is applied.
\cite{swift04} translates $\mathcal{ALCQI}$ concept expressions to
first-order formulas, grounds them with a finite number of constants,
and transforms the result to a logic program.  One can use a finite
number of constants by the finite model property of $\mathcal{ALCQI}$.
In the presence of terminological axioms this is no longer possible
since the finite model property is not guaranteed to hold.
\par
In \cite{levy96carin}, the DL \alcnr{} ($\mathcal{R}$ stands for role
intersection) is extended with Horn clauses
${q(\vec{Y})}\gets{p_1(\vec{X}_1),\ldots, p_n(\vec{X}_n)}$ where the
variables in $\vec{Y}$ must appear in
$\vec{X}_1\cup\ldots\cup\vec{X}_n$; $p_1, \ldots, p_n$ are either
concept or role names, or ordinary predicates not appearing in the
DL part, and $q$ is an ordinary predicate.   There is no safeness in
the sense that every variable must appear in a non-DL atom.  The
semantics is defined through extended interpretations that
satisfy both the DL and clauses part (as FOL formulas).
Query answering is undecidable if recursive Horn clauses are allowed,
but decidability can be regained by restricting the DL part or by
enforcing that the clauses are role safe (each variable in a role atom
$R(X,Y)$ for a role $R$ must appear in a non-DL atom).  Note that the
latter restriction is less strict than the DL-safeness\footnote{DL-safeness is
a restriction of the earlier mentioned weak safeness.} of
\cite{motik}, where also variables in concept atoms $A(X)$ need to
appear in non-DL atoms.  On the other hand, \cite{motik} allows for
the more expressive DL \shoind, and the head predicates may be DL
atoms as well. Finally, SWRL~\cite{Horrocks+Patel-Schneider-PropRuleLang:04}
can be seen as an extension of~\cite{motik} without any safeness restriction,
which results in the loss of decidability of the formalism.
Compared to our work, we consider a slightly less expressive Description Logic,
but we consider logic programs with nonmonotonic negation, and require guardedness,
rather than role- or DL-safeness, to guarantee decidability.
\par
In \cite{eiter2004} \emph{Description Logic
programs}\index{description logic programs} are introduced; atoms in
the program component may be \emph{dl-atoms} with which one can query
the knowledge in the DL component.  Such \emph{dl-atoms} may specify information
from the logic program which needs to be taken into account when evaluating the query,
yielding a bi-directional
flow of information.  This leads to a minimal interface between the DL knowledge base and the logic program,
enabling a very loose integration, based on an entailment relation.  In contrast,
we propose a much tighter integration between the rules and the ontology,
with interaction based on single models
rather than entailment. For a detailed discussion of these two kinds of interaction, we refer to
\cite{bruijn06-repres-issues-about-combin-of}.

Two recent approaches~\cite{motik07,bruijn07} use an embedding in a nonmonotonic modal logic
for integrating nonmonotonic logic programs and ontologies based on classical logic (e.g.~DL).
\cite{motik07} use the nonmonotonic logic of Minimal Knowledge and Negation as Failure (MKNF) for the
combination, and show decidability of reasoning in case reasoning in the considered description logic is decidable, and the DL safeness condition~\cite{motik} holds for the rules
in the logic program.
In our approach, we do not require  such a safeness condition, but require the rules to be
\emph{guarded}, and make a semantic distinction between DL predicates and rule predicates.
\cite{bruijn07} introduce several embeddings of non-ground logic programs
in first-order autoepistemic logic (FO-AEL), and compare them under combination with
classical theories (ontologies). However, \cite{bruijn07}
do not address the issue of decidability or reasoning of such combinations.

Finally,~\cite{bruijn-et-al-06} use Quantified Equilibrium Logic as a single unifying language
to capture different approaches to hybrid knowledge bases, including
the approach presented in this paper.  Although
we have presented a translation of g-hybrid knowledge bases to guarded logic programs,
our direct semantics is still based on two modules, relying on separate interpretations for
the DL knowledge base and the logic program,
whereas~\cite{bruijn-et-al-06} define equilibrium models, which serve
to give a unifying semantics to the hybrid knowledge base. The approach of
\cite{bruijn-et-al-06} may be used to define a notion of equivalence
between, and may lead to new algorithms for reasoning with, g-hybrid knowledge bases.

\section{Conclusions and Directions for Further Research}\label{sec:conclusions}

We defined g-hybrid knowledge bases which combine Description Logic
(DL) knowledge bases with guarded logic
programs. In particular, we combined knowledge bases of the DL
\dlrom{}, which is close to OWL DL, with
guarded programs, and showed decidability of this framework by a
reduction to guarded programs under the open answer set semantics
\cite{hvnv-lpnmr2005,Heymans+NieuwenborghETAL-OpenAnswProgwith:06}.
 We discussed the relation with
\dllog{} knowledge bases: g-hybrid knowledge bases overcome some of
the limitations of \dllog{}, such as the unique names
assumption, Datalog safeness, and
weak DL-safeness, but introduce the requirement of guardedness.  At present,
a significant disadvantage of our approach is the lack of support for
DLs with number restrictions which is inherent to the use of guarded
programs as our decidability vehicle.  A solution for this would be to
consider other types of programs, such as \emph{conceptual logic
programs} \cite{hvnv-amaijournal2006}. This would allow for the
definition of a hybrid knowledge base $(\Sigma,P)$ where  $\Sigma$ is
a \shiq{} knowledge base and $P$ is a conceptual logic program since
\shiq{} knowledge bases can be translated to conceptual logic
programs.
\par
Although there are known complexity bounds for several fragments of open answer set
programming (OASP), including the guarded fragment considered in this paper, there are no known
effective algorithms for OASP.  Additionally, at presence, there are no implemented
systems for open answer set
programming.  These are part of future work.

%

\bibliographystyle{acmtrans}
\bibliography{tplp_alpsws2006}

\begin{thebibliography}{}

\bibitem[\protect\citeauthoryear{Baader, Calvanese, McGuinness, Nardi, and
  Patel-Schneider}{Baader et~al\mbox{.}}{2003}]{dlbook}
{\sc Baader, F.}, {\sc Calvanese, D.}, {\sc McGuinness, D.}, {\sc Nardi, D.},
  {\sc and} {\sc Patel-Schneider, P.} 2003.
\newblock {\em {T}he {D}escription {L}ogic {H}andbook}.
\newblock {C}ambridge {U}niversity {P}ress.

\bibitem[\protect\citeauthoryear{Calvanese, De~Giacomo, and
  Lenzerini}{Calvanese et~al\mbox{.}}{1997}]{DL-97}
{\sc Calvanese, D.}, {\sc De~Giacomo, G.}, {\sc and} {\sc Lenzerini, M.} 1997.
\newblock Conjunctive {Q}uery {C}ontainment in {D}escription {L}ogics with
  $n$-ary {R}elations.
\newblock In {\em Proc. of the 1997 Description Logic Workshop (DL'97)}. 5--9.

\bibitem[\protect\citeauthoryear{de~Bruijn, Eiter, Polleres, and
  Tompits}{de~Bruijn
  et~al\mbox{.}}{2006}]{bruijn06-repres-issues-about-combin-of}
{\sc de~Bruijn, J.}, {\sc Eiter, T.}, {\sc Polleres, A.}, {\sc and} {\sc
  Tompits, H.} 2006.
\newblock On representational issues about combinations of classical theories
  with nonmonotonic rules.
\newblock In {\em Proceedings of the First International Conference on
  Knowledge Science, Engineering and Management (KSEM'06)}. Number 4092 in
  Lecture Notes in Artificial Intelligence. Springer-Verlag, Guilin, China.

\bibitem[\protect\citeauthoryear{de~Bruijn, Eiter, Polleres, and
  Tompits}{de~Bruijn et~al\mbox{.}}{2007}]{bruijn07}
{\sc de~Bruijn, J.}, {\sc Eiter, T.}, {\sc Polleres, A.}, {\sc and} {\sc
  Tompits, H.} 2007.
\newblock Embedding non-ground logic programs into autoepistemic logic for
  knowledge-base combination.
\newblock In {\em Proceedings of the Twentieth International Joint Conference
  on Artificial Intelligence (IJCAI-07)}. AAAI Press, Hyderabad, India.

\bibitem[\protect\citeauthoryear{de~Bruijn, Pearce, Polleres, and
  Valverde}{de~Bruijn et~al\mbox{.}}{2006}]{bruijn-et-al-06}
{\sc de~Bruijn, J.}, {\sc Pearce, D.}, {\sc Polleres, A.}, {\sc and} {\sc
  Valverde, A.} 2006.
\newblock A logic for hybrid rules.
\newblock In {\em Proceedings of the Ontology and Rule Integration Workshop at
  the 2nd International Conference on Rules and Rule Markup Languages for the
  Semantic Web}. Athens, Georgia, USA.

\bibitem[\protect\citeauthoryear{Dean and Schreiber}{Dean and
  Schreiber}{2004}]{owl}
{\sc Dean, M.} {\sc and} {\sc Schreiber, G.}, Eds. 2004.
\newblock {\em {OWL} Web Ontology Language Reference}.
\newblock W3C Recommendation 10 February 2004.

\bibitem[\protect\citeauthoryear{Donini, Lenzerini, Nardi, and Schaerf}{Donini
  et~al\mbox{.}}{1998}]{doni-lenz-nard-scha-98}
{\sc Donini, F.~M.}, {\sc Lenzerini, M.}, {\sc Nardi, D.}, {\sc and} {\sc
  Schaerf, A.} 1998.
\newblock {AL}-log: {I}ntegrating {D}atalog and {D}escription {L}ogics.
\newblock {\em J. of Intell. and Cooperative Information Systems\/}~{\em 10},
  227--252.

\bibitem[\protect\citeauthoryear{Eiter, Lukasiewicz, Schindlauer, and
  Tompits}{Eiter et~al\mbox{.}}{2004}]{eiter2004}
{\sc Eiter, T.}, {\sc Lukasiewicz, T.}, {\sc Schindlauer, R.}, {\sc and} {\sc
  Tompits, H.} 2004.
\newblock {C}ombining {A}nswer {S}et {P}rogramming with {DL}s for the
  {S}emantic {W}eb.
\newblock In {\em Proc. of KR 2004}. 141--151.

\bibitem[\protect\citeauthoryear{Gelfond and Lifschitz}{Gelfond and
  Lifschitz}{1988}]{gelfond88stable}
{\sc Gelfond, M.} {\sc and} {\sc Lifschitz, V.} 1988.
\newblock {T}he {S}table {M}odel {S}emantics for {L}ogic {P}rogramming.
\newblock In {\em Proc. of ICLP'88}. {MIT} Press, Cambridge, Massachusetts,
  1070--1080.

\bibitem[\protect\citeauthoryear{Gr\"adel and Walukiewicz}{Gr\"adel and
  Walukiewicz}{1999}]{gradel}
{\sc Gr\"adel, E.} {\sc and} {\sc Walukiewicz, I.} 1999.
\newblock {G}uarded {F}ixed {P}oint {L}ogic.
\newblock In {\em Proc. of LICS '99}. IEEE Computer Society, 45--54.

\bibitem[\protect\citeauthoryear{Heymans, Van~Nieuwenborgh, Fensel, and
  Vermeir}{Heymans et~al\mbox{.}}{2006}]{hvnfv-row2006}
{\sc Heymans, S.}, {\sc Van~Nieuwenborgh, D.}, {\sc Fensel, D.}, {\sc and} {\sc
  Vermeir, D.} 2006.
\newblock Reasoning with the {D}escription {L}ogic \dlrom{} using {B}ound
  {G}uarded {P}rograms.
\newblock In {\em Proc. of Reasoning on the Web workshop (RoW 2006)}.
  Edinburgh, UK.

\bibitem[\protect\citeauthoryear{Heymans, Van~Nieuwenborgh, and
  Vermeir}{Heymans et~al\mbox{.}}{2005a}]{hvnv-lpnmr2005}
{\sc Heymans, S.}, {\sc Van~Nieuwenborgh, D.}, {\sc and} {\sc Vermeir, D.}
  2005a.
\newblock {G}uarded {O}pen {A}nswer {S}et {P}rogramming.
\newblock In {\em LPNMR 2005}. Number 3662 in LNAI. Springer, 92--104.

\bibitem[\protect\citeauthoryear{Heymans, Van~Nieuwenborgh, and
  Vermeir}{Heymans et~al\mbox{.}}{2005b}]{hvnv:eswc2005}
{\sc Heymans, S.}, {\sc Van~Nieuwenborgh, D.}, {\sc and} {\sc Vermeir, D.}
  2005b.
\newblock Nonmonotonic {O}ntological and {R}ule-{B}ased {R}easoning with
  {E}xtended {C}onceptual {L}ogic {P}rograms.
\newblock In {\em Proc. of ESWC 2005}. Number 3532 in LNCS. {S}pringer,
  392--407.

\bibitem[\protect\citeauthoryear{Heymans, Van~Nieuwenborgh, and
  Vermeir}{Heymans et~al\mbox{.}}{2006a}]{hvnv-amaijournal2006}
{\sc Heymans, S.}, {\sc Van~Nieuwenborgh, D.}, {\sc and} {\sc Vermeir, D.}
  2006a.
\newblock Conceptual logic programs.
\newblock {\em Annals of Mathematics and Artificial Intelligence (Special Issue
  on Answer Set Programming)\/}~{\em 47,\/}~1-2 (June), 103--137.

\bibitem[\protect\citeauthoryear{Heymans, Van~Nieuwenborgh, and
  Vermeir}{Heymans
  et~al\mbox{.}}{2006b}]{Heymans+NieuwenborghETAL-OpenAnswProgwith:06}
{\sc Heymans, S.}, {\sc Van~Nieuwenborgh, D.}, {\sc and} {\sc Vermeir, D.}
  2006b.
\newblock Open answer set programming with guarded programs.
\newblock {\em ACM Transactions on Computational Logic (TOCL)\/}.
\newblock Accepted for Publication.

\bibitem[\protect\citeauthoryear{Horrocks and Patel-Schneider}{Horrocks and
  Patel-Schneider}{2004a}]{HoPa04b}
{\sc Horrocks, I.} {\sc and} {\sc Patel-Schneider, P.} 2004a.
\newblock Reducing {OWL} entailment to description logic satisfiability.
\newblock {\em J.\ of Web Semantics\/}~{\em 1,\/}~4, 345--357.

\bibitem[\protect\citeauthoryear{Horrocks and Patel-Schneider}{Horrocks and
  Patel-Schneider}{2004b}]{Horrocks+Patel-Schneider-PropRuleLang:04}
{\sc Horrocks, I.} {\sc and} {\sc Patel-Schneider, P.~F.} 2004b.
\newblock A proposal for an {OWL} rules language.
\newblock In {\em Proc.\ of the Thirteenth International World Wide Web
  Conference (WWW 2004)}. ACM, 723--731.

\bibitem[\protect\citeauthoryear{Hustadt, Motik, and Sattler}{Hustadt
  et~al\mbox{.}}{2004}]{hustadt2003}
{\sc Hustadt, U.}, {\sc Motik, B.}, {\sc and} {\sc Sattler, U.} 2004.
\newblock Reducing {SHIQ}$^-$ description logic to disjunctive logic programs.
\newblock In {\em Proceedings of the 9th International Conference on Knowledge
  Representation and Reasoning KR2004}. Whistler, Canada, 152--162.

\bibitem[\protect\citeauthoryear{Inoue and Sakama}{Inoue and
  Sakama}{1998}]{inoue98negation}
{\sc Inoue, K.} {\sc and} {\sc Sakama, C.} 1998.
\newblock Negation as failure in the head.
\newblock {\em Journal of Logic Programming\/}~{\em 35,\/}~1, 39--78.

\bibitem[\protect\citeauthoryear{Levy and Rousset}{Levy and
  Rousset}{1996}]{levy96carin}
{\sc Levy, A.~Y.} {\sc and} {\sc Rousset, M.} 1996.
\newblock {CARIN}: {A} {R}epresentation {L}anguage {C}ombining {H}orn {R}ules
  and {D}escription {L}ogics.
\newblock In {\em Proc. of ECAI'96}. 323--327.

\bibitem[\protect\citeauthoryear{Lifschitz, Pearce, and Valverde}{Lifschitz
  et~al\mbox{.}}{2001}]{lifschitz:2001}
{\sc Lifschitz, V.}, {\sc Pearce, D.}, {\sc and} {\sc Valverde, A.} 2001.
\newblock Strongly {E}quivalent {L}ogic {P}rograms.
\newblock {\em {ACM} {T}ransactions on {C}omputational {L}ogic\/}~{\em 2,\/}~4,
  526--541.

\bibitem[\protect\citeauthoryear{Motik and Rosati}{Motik and
  Rosati}{2007}]{motik07}
{\sc Motik, B.} {\sc and} {\sc Rosati, R.} 2007.
\newblock A faithful integration of description logics with logic programming.
\newblock In {\em Proceedings of the Twentieth International Joint Conference
  on Artificial Intelligence (IJCAI-07)}. Hyderabad, India.

\bibitem[\protect\citeauthoryear{Motik, Sattler, and Studer}{Motik
  et~al\mbox{.}}{2004}]{motik}
{\sc Motik, B.}, {\sc Sattler, U.}, {\sc and} {\sc Studer, R.} 2004.
\newblock {Q}uery {A}nswering for {OWL-DL} with {R}ules.
\newblock In {\em Proc. of ISWC 2004}. Number 3298 in LNCS. {S}pringer,
  549--563.

\bibitem[\protect\citeauthoryear{Rosati}{Rosati}{2005a}]{rosati:05}
{\sc Rosati, R.} 2005a.
\newblock On the decidability and complexity of integrating ontologies and
  rules.
\newblock {\em Web Semantics\/}~{\em 3,\/}~1, 41--60.

\bibitem[\protect\citeauthoryear{Rosati}{Rosati}{2005b}]{Rosa05b}
{\sc Rosati, R.} 2005b.
\newblock Semantic and computational advantages of the safe integration of
  ontologies and rules.
\newblock In {\em Proc. of the Third International Workshop on Principles and
  Practice of Semantic Web Reasoning (PPSWR~2005)}. LNCS, vol. 3703. Springer,
  50--64.

\bibitem[\protect\citeauthoryear{Rosati}{Rosati}{2006}]{Rosa06b}
{\sc Rosati, R.} 2006.
\newblock {DL+log}: Tight integration of description logics and disjunctive
  datalog.
\newblock In {\em Proc. of the Tenth International Conference on Principles of
  Knowledge Representation and Reasoning (KR~2006)}. AAAI Press, 68--78.

\bibitem[\protect\citeauthoryear{Swift}{Swift}{2004}]{swift04}
{\sc Swift, T.} 2004.
\newblock {D}eduction in {O}ntologies via {A}nswer {S}et {P}rogramming.
\newblock In {\em LPNMR}, {V.~Lifschitz} {and} {I.~Niemel{\"a}}, Eds. LNCS,
  vol. 2923. Springer, 275--288.

\end{thebibliography}
\end{document}